\newcommand{\arxivexcl}[2]{\ifthenelse{\boolean{arxivversion}}{#1}{#2}}
\newcommand{\confcmt}[1]{\ifthenelse{\boolean{confversion}}{#1}{}}
\newcommand{\fullcmt}[1]{\ifthenelse{\boolean{fullversion}}{#1}{}}
\newcommand{\front}{\text{front}}
\newcommand{\first}{\text{first}}
\newcommand{\midd}{\text{middle}}
\newcommand{\last}{\text{last}}
\newcommand{\rest}{\text{rest}}
\newcommand{\bigO}{\mathcal{O}}
\newcommand{\iref}[1]{I.\ref{#1}} 
\newcommand{\executeiffilenewer}[3]{
 \ifnum\pdfstrcmp{\pdffilemoddate{#1}}
 {\pdffilemoddate{#2}}>0
 {\immediate\write18{#3}}\fi
}
\newcommand{\fig}[6]{
  \begin{figure}[#1] 
    \centering
    \begin{minipage}{#2\linewidth}
      \centering
      \arxivexcl{\includegraphics[#3]{#4}}
      {\includegraphics[#3]{figure/#4}}
      \caption{#6}
      \label{#5}
    \end{minipage}
  \end{figure}
}
\newtheorem{theorem}{Theorem}[section]
\newtheorem{lemma}{Lemma}[section]
\newtheorem{remark}{Remark}[section]
\begin{document}
\title{\Large I/O-Efficient Dynamic Planar Range Skyline
Queries\confcmt{\footnote{This is an extended abstract, the full paper is
available at \url{http://arxiv.org/??? }}}}
\author{ Casper Kejlberg-Rasmussen\\
 MADALGO\thanks{Center for Massive Data Algorithmics - a
   Center of the Danish National Research
   Foundation}\\Department of Computer
 Science\\ Aarhus University,
 Denmark\\ \texttt{ckr@madalgo.au.dk}
\and
Konstantinos Tsakalidis\\
      Computer Engineering and \\Informatics Department\\
      University of Patras, Greece\\
      \texttt{tsakalid@ceid.upatras.gr}
\and  
Kostas Tsichlas\\
      Computer Science Department\\
      Aristotle University\\ of Thessaloniki, Greece\\
      \texttt{tsichlas@csd.auth.gr}
}
\date{}
\maketitle
\begin{abstract}
\noindent We present the first fully dynamic worst case I/O-efficient data
structures that support planar orthogonal \textit{3-sided range skyline
reporting queries} in $\bigO (\log_{2B^\epsilon} n + \frac{t}{B^{1-\epsilon}})$
I/Os and updates in $\bigO  (\log_{2B^\epsilon} n)$~I/Os, using $\bigO
(\frac{n}{B^{1-\epsilon}})$ blocks of space, for~$n$ input planar points,~$t$
reported points, and parameter $0 \leq \epsilon \leq 1$. We obtain the result by
extending Sundar's priority queues with attrition to support the
operations~\textsc{DeleteMin} and~\textsc{CatenateAndAttrite} in~$\bigO  (1)$
worst case I/Os, and in~$\bigO(1/B)$ amortized I/Os given that a constant number
of blocks is already loaded in main memory. Finally, we show that any
pointer-based static data structure that supports \textit{dominated maxima
reporting queries}, namely the difficult special case of 4-sided skyline
queries, in~$\bigO(\log^{\bigO(1)}n +t)$ worst case time must occupy~$\Omega(n
\frac{\log n}{\log \log n})$ space, by adapting a similar lower bounding
argument for planar 4-sided range reporting queries.
\end{abstract}
\section{Introduction}\label{sec:intro}

We study the problem of maintaining a set of planar points in external memory
subject to insertions and deletions of points in order to support planar
orthogonal 3-sided range skyline reporting queries efficiently in the worst
case. For two points~$p,q \in \mathbb{R}^d$, we say that~$p$
\textit{dominates}~$q$, if and only if all the coordinates of~$p$ are greater
than those of~$q$. The~\textit{skyline} of a pointset~$P$ consists of
the~\textit{maximal points} of~$P$, which are the points in $P$ that are not
dominated by any other point in $P$. \textit{Planar 3-sided range skyline
reporting queries} that report the maximal points among the points that lie

Skyline computation has been receiving increasing attention in the field of
databases since the introduction of the skyline operator for SQL~\cite{BKS01}.
Skyline points correspond to the``interesting'' entries of a relational database
as they are optimal simultaneously over all attributes. The considered variant
of planar skyline queries adds the capability of reporting the interesting
entries among those input entries whose attribute values belong to a given
3-sided range.  Databases used in practical applications usually process massive
amounts of data in dynamic environments, where the data can be modified by
update operations.  Therefore we analyze our algorithms in the \textit{I/O
model}~\cite{AV88}, which is commonly used to capture the complexity of massive
data computation.  It assumes that the input data resides in the disk (external
memory) divided in blocks of $B$ consecutive words, and that computation occurs
for free in the internal memory of size $M$ words.  An \textit{I/O-operation
(I/O)} reads a block of data from the disk into the internal memory, or writes a
block of data to the disk. Time complexity is expressed in number of I/Os, and
space complexity in the number of blocks that the input data occupies on the
disk.

\paragraph{Previous Results}

Different approaches have been proposed for maintaining the $d$-dimensional
skyline in external memory under update operations, assuming for example offline
updates over data streams~\cite{TP06,MPG07}, only online
deletions~\cite{WAEA07}, online average case updates~\cite{PTFS05}, arbitrary
online updates \cite{HZK08} and online updates over moving input
points~\cite{HLOT06}. The efficiency of all previous approaches is measured
experimentally in terms of disk usage over average case data. However, even for
the planar case, no I/O-efficient structure exists that supports both arbitrary
insertions and deletions in sublinear worst case I/Os. Regarding internal
memory, Brodal and Tsakalidis~\cite{BT11} present two linear space dynamic data
structures that support 3-sided range skyline reporting queries in~$\bigO(\log n
+ t)$ and~$\bigO(\frac{\log n}{\log \log n} +t)$ worst case time, and updates
in~$\bigO(\log n)$ and~$\bigO(\frac{\log n}{\log \log n})$ worst case time in
the pointer machine and the RAM model, respectively, where~$n$ is the input size
and~$t$ is the output size. They also present an~$\bigO(n \log n)$ space dynamic
pointer-based data structure that supports 4-sided range skyline reporting
queries in~$\bigO(\log^2 n + t)$ worst case time and updates in~$\bigO(\log^2
n)$ worst case time. Adapting these structures to the I/O model attains
$\bigO(\log^{\bigO(1)}_B n + t)$ query I/Os, which is undesired since $\bigO(1)$
I/Os are spent per reported point.

Regarding the static variant of the problem, Sheng and Tao~\cite{ST11} obtain an
I/O-efficient algorithm that computes the skyline of a static $d$-dimensional
pointset in~$\bigO(\frac{n}{B}\log^{d-2}_{\frac{M}{B}}\frac{n}{B} )$ worst case
I/Os, for~$d\geq 3$, by adapting the internal memory algorithms of
\cite{KLP75,B80} to external memory. $\bigO(
\frac{n}{B}\log_{\frac{M}{B}}\frac{n}{B})$~I/Os can be achieved for the planar
case. There exist two~$\bigO(n \log n)$ and~$\bigO(n \frac{\log n}{\log \log
n})$ space static data structures that support planar 4-sided range skyline
reporting queries in~$\bigO(\log n +t )$ and~$\bigO(\frac{\log n}{\log \log n}
+ t)$ worst case time, for the pointer machine and the RAM,
respectively~\cite{KDKS11,DGKASK12}.

\paragraph{Our Results}

In Section~\ref{sec:iocpqa} we present the basic building block of the
structures for dynamic planar range skyline reporting queries that we present in
Section~\ref{sec:skyline}. That is pointer-based~\textit{I/O-efficient catenable
priority queues with attrition (I/O-CPQAs)} that support the
operations~\textsc{DeleteMin} and~\textsc{CatenateAndAttrite} in~$\bigO(1/B)$
amortized I/Os and in~$\bigO(1)$ worst case I/Os, using $\bigO(\frac{n-m}{B})$
disk blocks, after~$n$ calls to \textsc{CatenateAndAttrite} and~$m$ calls to
\textsc{DeleteMin}. The result is obtained by modifying appropriately a proposed
implementation for priority queues with attrition of Sundar~\cite{S89}.
 
In Section~\ref{sec:skyline} we present our main result, namely I/O-efficient
dynamic data structures that support 3-sided range skyline reporting queries in
$\bigO(\log_{2B^\epsilon} n + \frac{t}{B^{1-\epsilon}})$ worst case~I/Os and
updates in $\bigO(\log_{2B^\epsilon} n)$ worst case~I/Os, using
$\bigO(\frac{n}{B^{1-\epsilon}})$ blocks, for a parameter~$0\leq \epsilon \leq
1$. These are the first fully dynamic skyline data structures for external
memory that support all operations in polylogarithmic worst case time. The
results are obtained by following the approach of Overmars and van
Leeuwen~\cite{OL81} for planar skyline maintainance and utilizing confluently
persistent I/O-CPQAs (implemented with functional catenable deques~\cite{KT99}).
Applying the same methodology to internal memory pointer-based CPQAs yields
alternative implementations for dynamic 3-sided 
reporting in the pointer machine in the same bounds as in~\cite{BT11}.

Finally, in Section~\ref{sec:dommaxlb} we prove that any pointer-based static
data structure that supports reporting the maximal points among the points that
are dominated by a given query point in~$\bigO(\log^{\bigO(1)}n)$ worst case
time must occupy~$\Omega(n \frac{\log n}{\log \log n})$ space, by adapting the
similar lower bounding argument of Chazelle~\cite{C90} for planar 4-sided range
reporting queries to the considered dominated skyline reporting queries. These
queries are termed as~\textit{dominating minima reporting queries}. The
symmetric case of~\textit{dominated maxima reporting queries} is equivalent and
comprises a special case of rectangular visibilty queries~\cite{OW88} and
4-sided range skyline reporting queries \cite{BT11,KDKS11}. The result shows
that the space usage of the pointer-based structures in~\cite{OW88,BT11,KDKS11}
is optimal within a $\bigO(\log \log n)$ factor, for the attained query time.

\section{Preliminaries} \label{sect:prel}

\paragraph{Priority Queues with Attrition} 

Sundar~\cite{S89} introduces pointer-based \emph{priority queues with attrition
(PQAs)} that support the following operations in~$\bigO(1)$ worst case time on
a set of elements drawn from a total order: \textsc{DeleteMin} deletes and
returns the minimum element from the PQA, and \textsc{InsertAndAttrite($e$)}
inserts element~$e$ into the PQA and removes all elements larger than~$e$ from
the PQA. PQAs use space linear to the number of inserted elements minus the
number of elements removed by \textsc{DeleteMin}.

\paragraph{Functional Catenable Deques}

A dynamic data structure is \textit{persistent} when it maintains its previous
versions as update operations are performed on it. It is \textit{fully
persistent} when it permits accessing and updating the previous versions. In
turn, it is called \textit{confluently persistent} when it is fully persistent,
and moreover it allows for two versions to be combined into a new version, by
use of an update operation that merges the two versions. In this case, the
versions form a directed acyclic version graph. A catenable deque is a list
that stores a set of elements from a total order, and supports the operations
~\textsc{Push} and~\textsc{Inject} that insert an element to the head and tail
of the list respectively, ~\textsc{Pop} and~\textsc{Eject} that remove the
element from the head and tail of the list respectively, and~\textsc{Catenate}
that concatenates two lists into one. Kaplan and Tarjan \cite{KT99} present
\textit{purely functional catenable deques} that are confluently persistent and
support the above operations in~$\bigO(1)$ worst case time.

\paragraph{Searching Lower Bound in the Pointer Machine}

In the pointer machine model a data structure that stores a data set $S$ and
supports range reporting queries for a query set $\mathcal{Q}$, can be modelled
as a directed graph $G$ of bounded out-degree. In particular, every node in $G$
may be assigned an element of $S$ or may contain some other useful information.
For a query range $Q_i\in \mathcal{Q}$, the algorithm navigates over the edges
of $G$ in order to locate all nodes that contain the answer to the query. The
algorithm may also traverse other nodes. The time complexity of reporting the
output of~$Q_i$ is at least equal to the number of nodes accessed in graph~$G$
for~$Q_i$. To prove a lower bound we need to construct hard instances with
particular properties, as discussed by Chazelle and Liu~\cite{C90,CL04}. In
particular, they define the graph~$G$ to be
$(\alpha,\omega)$-\textit{effective}, if a query is supported in $\alpha(t +
\omega)$ time, where~$t$ is the output size,~$\alpha$ is a multiplicative factor
for the output size ($\alpha = \bigO(1)$ for our purposes) and~$\omega$ is the
additive factor. They also define a query set~$\mathcal{Q}$ to be
$(m,\omega)$-\textit{favorable} for a data set~$S$, if $|S \cap Q_i| \geq
\omega, \forall Q_i \in \mathcal{Q}$ and $|S \cap Q_{i_1}\cap \cdots \cap
Q_{i_m}| = \bigO(1), \forall i_1 <i_2 \cdots< i_m$. Intuitively, the first part
of this property requires that the size of the output is large enough (at
least~$\omega$) so that it dominates the additive factor of~$\omega$ in the time
complexity. The second part requires that the query outputs have minimum
overlap, in order to force~$G$ to be large without many nodes containing the
output of many queries. The following lemma exploits these properties to provide
a lower bound on the minimum size of~$G$.

\begin{lemma} \label{lem:lower} \cite[Lemma 2.3]{CL04} For an
  $(m,\omega)$-favorable graph~$G$ for the data set~$S$, and for an
  $(\alpha,\omega)$-effective set of queries~$\mathcal{Q}$, $G$
  contains~$\Omega(|\mathcal{Q}|\omega/m)$ nodes, for
  constant~$\alpha$ and for any large enough~$\omega$.
\end{lemma}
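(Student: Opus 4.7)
The plan is to prove the bound via a double-counting argument combining the two favorability clauses with the effectiveness constraint. For every query $Q_i$, let $V_i\subseteq V(G)$ denote the set of nodes that the algorithm traverses to answer $Q_i$, and let $O_i\subseteq V_i$ be those traversed nodes carrying an element of $S\cap Q_i$. Effectiveness gives $|V_i|\le\alpha(t_i+\omega)$, and the first favorability clause gives $t_i=|O_i|\ge\omega$. Define $\deg(v)=|\{i:v\in O_i\}|$ for every $v\in V(G)$, and set $N=|V(G)|$, which is what we want to lower-bound.

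First, I would sum incidences over queries to obtain $\sum_v\deg(v)=\sum_i|O_i|\ge|\mathcal{Q}|\omega$. Second, I would apply the $m$-wise favorability clause: for every $m$-subset $\{i_1,\dots,i_m\}$ of distinct queries, $|O_{i_1}\cap\dots\cap O_{i_m}|=\bigO(1)$, so summing over all such $m$-subsets yields $\sum_v\binom{\deg(v)}{m}=\bigO(\binom{|\mathcal{Q}|}{m})$.

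The remaining task is to turn this pair of inequalities into $N=\Omega(|\mathcal{Q}|\omega/m)$. I would split the nodes into \emph{light} nodes $L=\{v:\deg(v)<m\}$ and \emph{heavy} nodes $H=\{v:\deg(v)\ge m\}$. The incidences incident to $L$ are at most $(m-1)|L|$, so it suffices to argue that $L$ carries a constant fraction of all at least $|\mathcal{Q}|\omega$ incidences; this gives $N\ge|L|=\Omega(|\mathcal{Q}|\omega/m)$ directly.

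The step I expect to be the main obstacle is ruling out the case that the heavy nodes absorb more than a constant fraction of the incidences. A naive Jensen bound on the convex function $\binom{x}{m}$ is too weak, yielding only $N=\Omega(\omega)$. The sharper tool is the inequality $\binom{j}{m}\ge(j/m)^m$ for $j\ge m$, which converts the $m$th-moment cap coming from favorability into a power-mean constraint on the heavy degree sequence; combining this with the restriction of the total-incidence lower bound to $H$ should force $H$ either to be small, or to spread incidences lightly enough that $L$ still dominates and carries $\Omega(|\mathcal{Q}|\omega)$ incidences. Since the statement is quoted verbatim from Chazelle and Liu, the quantitative details of that thresholding argument can be inherited from~\cite{CL04}.
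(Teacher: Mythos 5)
You should first note that the paper itself gives no proof of this lemma; it is imported verbatim from Chazelle and Liu \cite[Lemma 2.3]{CL04}, so your proposal has to be judged against the known argument in that literature. The central problem with your plan is that, after extracting the two facts $\sum_v \deg(v)\ge|\mathcal{Q}|\omega$ and $\sum_v\binom{\deg(v)}{m}=\bigO\bigl(\binom{|\mathcal{Q}|}{m}\bigr)$, you have discarded every ingredient that actually drives the lemma: the bounded out-degree of $G$, the fact that the nodes visited by a query form a small \emph{connected} subgraph, and the effectiveness bound $|V_i|\le\alpha(t_i+\omega)$, which you state once and never use again. Those two incidence inequalities alone simply do not imply $N=\Omega(|\mathcal{Q}|\omega/m)$. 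Concretely, take the points and lines of a projective plane of order $\omega-1$: there are $N\approx\omega^2$ nodes, $|\mathcal{Q}|\approx\omega^2$ sets $O_i$ of size exactly $\omega$, every node has degree $\omega$, and any two sets meet in exactly one node, so with $m=2$ both of your inequalities hold (the second with constant $1$), yet $|\mathcal{Q}|\omega/m\approx\omega^3/2\gg N$. Hence the step you yourself flag as the main obstacle is not merely delicate: no light/heavy split, Jensen refinement, or use of $\binom{j}{m}\ge(j/m)^m$ can close it, because the heavy nodes really can absorb all incidences while respecting the $m$-th moment cap. A secondary, fixable issue is that with $O_i$ defined as \emph{all} traversed nodes carrying an element of $S\cap Q_i$, the bound $|O_{i_1}\cap\cdots\cap O_{i_m}|=\bigO(1)$ does not follow from favorability, since one element of $S$ may be stored at many nodes; you would need to charge one representative node per output element per query.

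The Chazelle--Liu proof is not a thresholding of these degree statistics but a graph argument in which effectiveness and the pointer-machine structure do real work: for each query the accessed set is a connected subgraph of size at most $\alpha(t_i+\omega)=\bigO(t_i)$ (using $t_i\ge\omega$ and constant $\alpha$) containing $t_i$ nodes storing the distinct output elements, so these output-storing nodes are crowded and can be grouped into clusters of $m$ nodes of small diameter; bounded out-degree means a fixed node can anchor only $\bigO(1)$ such clusters, and if one cluster of $m$ \emph{distinct} elements served $m$ distinct queries, the second favorability clause would be violated. It is this mechanism---small connected working sets plus bounded fan-out---that converts favorability into the $\Omega(|\mathcal{Q}|\omega/m)$ node count, and it is exactly what your incidence-only formulation cannot express. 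Consequently, the closing move of ``inheriting the quantitative details from \cite{CL04}'' defers precisely the part of the argument your setup has lost, and the proposal as written does not constitute a proof of the lemma.
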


\section{I/O-Efficient Catenable Priority Queues with Attrition} \label{sec:iocpqa}

In this Section, we present I/O-efficient~\textit{catenable priority queues with
attrition (I/O-CPQAs)} that store a set of elements from a total order in
external memory, and support the following operations: 
\begin{description}
  \item \textsc{FindMin}($Q$) returns the minimum element in I/O-CPQA~$Q$.

  \item \textsc{DeleteMin}($Q$) removes the minimum element~$e$ from
    I/O-CPQA~$Q$ and returns element~$e$ and the new I/O-CPQA~$Q' = Q
    \backslash\{e\}$.

  \item \textsc{CatenateAndAttrite}($Q_1,Q_2$) concatenates I/O-CPQA~$Q_2$ to
    the end of I/O-CPQA~$Q_1$, removes all elements in~$Q_1$ that are larger
    than the minimum element in~$Q_2$, and returns a new I/O-CPQA $Q'_1 = \{e
    \in Q_1 | e < \min(Q_2)\} \cup Q_2$.  We say that the removed elements have
    been~\emph{attrited}.

  \item \textsc{InsertAndAttrite}($Q,e$) inserts element $e$ at the end of $Q$
    and attrites all elements in $Q$ that are larger than the value of~$e$.
\end{description}

All operations take~$\bigO(1)$ worst case I/Os and~$\bigO(1/b)$ amortized I/Os,
given that a constant number of blocks is already loaded into main memory, for a
parameter~$1 \leq b \leq B$. To achieve the result, we modify an implementation
for the PQAs of Sundar~\cite{S89}.

An I/O-CPQA~$Q$ consists of~$k_Q+2$ deques of records, called the clean
deque~$C(Q)$, the buffer deque~$B(Q)$ and the dirty
deques~$D_1(Q),\ldots,D_{k_Q}(Q)$, where~$k_Q \geq 0$. A
\emph{record}~$r = (l,p)$ consists of a buffer~$l$ of~$[b,4b]$ elements
of strictly increasing value and a pointer~$p$ to an I/O-CPQA. The ordering
of~$r$ is; first all elements of~$l$ and then all elements of the I/O-CPQA
pointed to by~$p$. We define the queue order of~$Q$ to be~$C(Q)$,~$B(Q)$
and~$D_1(Q),\ldots,D_{k_Q}(Q)$. A record is \emph{simple} when its pointer~$p$
is \emph{null}. The clean deque and the buffer deque only contains simple
records.\fullcmt{ See Figure~\ref{fig:Overview} for an overview of the
structure.}

\fullcmt{

  \begin{figure}[htb]
    \centering
    \def\svgwidth{\linewidth}
    \executeiffilenewer{./figure/CPQAOverview.svg}{./figure/CPQAOverview.pdf}
     {inkscape -z -D --file=./figure/CPQAOverview.svg 
     --export-pdf=./figure/CPQAOverview.pdf --export-latex}
    \arxivexcl{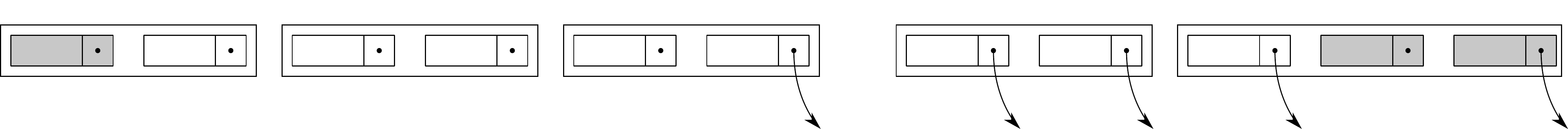}
    {\import{./figure/}{CPQAOverview.pdf_tex}}
    \caption{A I/O CPQA $Q$ consists of
  $k_Q + 2$ deques of records;~$C(Q), B(Q), D_1(Q),\ldots, D_{k_Q}(Q)$.  The
  records in~$C(Q)$ and~$B(Q)$ are simple, the records of~$D_1(Q),\ldots,
  D_{k_Q}(Q)$ may contain pointers to other I/O CPQA's. Gray recordsare always
loaded in memory.}
    \label{fig:Overview}
  \end{figure}
}

Given a record $r =(l,p)$ the minimum and maximum elements in the buffers
of~$r$, are denoted by~$\min(r) = \min(l)$ and~$\max(r) = \max(l)$,
respectively. They appear respectively first and last in the queue order of~$l$,
since the buffer of~$r$ is sorted by value. Henceforth, we do not distinguish
between an element and its value.  Given a deque~$q$ the first and the last
record is denoted by~$\first(q)$ and~$\last(q)$, respectively. Also~$\rest(q)$
denotes all records of the deque~$q$ excluding the record~$\first(q)$.
Similarly,~$\front (q)$ denotes all records for the deque~$q$ excluding the
record~$\last(q)$. The size~$|r|$ of a record~$r$ is defined to be the number of
elements in its buffer. The size~$|q|$ of a deque~$q$ is defined to be the
number of records it contains. The size~$|Q|$ of the I/O-CPQA~$Q$ is defined to
be the number of elements that $Q$ contains. For an I/O-CPQA $Q$ we denote by
$\first(Q)$ and $\last(Q)$, the first and last of the records in $C(Q), B(Q),
D_1(Q), \ldots, D_{k_Q}(Q)$ that exists, respectively. By~$\midd(Q)$ we denote
all records in~$Q$ and the records in the I/O-CPQAs pointed by $Q$, except for
records $\first(Q)$~and~$\last(Q)$ and the I/O-CPQAs they point to. We call an
I/O-CPQA~$Q$ \emph{large} if $|Q| \geq b$ and \emph{small} otherwise.  The
minimum value of all elements stored in the I/O-CPQA~$Q$ is denote by~$\min(Q)$.
For an I/O-CPQA~$Q$ we maintain the following invariants:

\begin{enumerate}[{I}.1)]
  \item \label{in:records} For every record~$r = (l,p)$ where pointer~$p$ points
    to I/O-CPQA~$Q'$,~$\max(l) < \min(Q')$ holds.
  
  \item \label{in:recordpairs} In all deques of~$Q$, where
    record~$r_1=(l_1,p_1)$ precedes record~$r_2=(l_2,p_2)$,~$\max(l_1) <
    \min(l_2)$ holds.

  \item \label{in:queuevalues} For the deques~$C(Q),B(Q)$ and~$D_1(Q)$,
    $\max(\last(C(Q))) < \min(\first(B(Q))) < \min(\first(D_1(Q)))$ holds.

  \item \label{in:min} Element~$\min(\first(D_1(Q)))$ has the minimum value
    among all the elements in the dirty deques~$D_1(Q),\ldots,D_k(Q)$.

  \item \label{in:simple} All records in the deques~$C(Q)$ and~$B(Q)$ are
    simple.

  \item \label{in:ineq} $|C(Q)| \geq \sum_{i=1}^{k_Q}{|D_i(Q)|}+k_Q-1$.

  \item \label{in:small} $|\first(C(Q))|<b$ holds, if and only if $|Q|< b$
    holds.

  \item \label{in:smalltail} $|\last(D_{k_Q}(Q))|< b$ holds, if and only if
    record~$\last(D_{k_Q}(Q))$ is simple. In this case $|r|\in[b,5b]$ holds.
\end{enumerate}
From Invariants~\iref{in:recordpairs},~\iref{in:queuevalues} and~\iref{in:min},
we have that the minimum element~$\min(Q)$ stored in the I/O-CPQA~$Q$ is
element~$\min(\first(C(Q)))$. We say that an operation \textit{improves} or
\textit{aggravates} by a parameter~$c$ the inequality of
invariant~\iref{in:ineq} for I/O-CPQA~$Q$, when the operation increases or
decreases $\Delta (Q) = |C(Q)| - \sum_{i=1}^{k_Q}{|D_i(Q)|} - k_Q + 1$ by~$c$,
respectively. To argue about the~$\bigO(1/b)$ amortized I/O bounds we define the
following potential functions for large and small I/O-CPQAs. In particular, for
large I/O-CPQAs~$Q$, the potential~$\Phi(Q)$ is defined as
\[
  \Phi(Q) = \Phi_F(|\first(Q)|) + |\midd(Q)| + \Phi_L(|\last(Q)|),
\]
where 
\[
  \begin{array}{ccc}
    {
      \Phi_F(x) = \left\{
      \begin{array}{cl}
        3 -\frac{x}{b}, & b \leq x < 2b \\
        1, & 2b \leq x < 3b \\
        \frac{2x}{b}-5, & 3b \leq x \leq 4b \\
      \end{array}
    \right.
    } & \text{and} & {
    \Phi_L(x) = \left\{
      \begin{array}{cl}
        0, & 0 \leq x < 4b \\
        \frac{3x}{b}-12, & 4b \leq x \leq 5b \\
      \end{array}
    \right.
    }
  \end{array}
\]
For small I/O-CPQAs~$Q$, the potential~$\Phi(Q)$ is defined as
\[
  \Phi(Q) = \frac{3|Q|}{b}
\]
The total potential $\Phi_T$ is defined as
\[
  \Phi_T = \sum_{Q}{\Phi(Q)} + \sum_{Q|b \leq |Q|}{1},
\]
where the first sum is over all I/O-CPQAs~$Q$ and the second sum is only over
all large I/O-CPQAs~$Q$.

\subsection{Operations}

In the following, we describe the algorithms that implement the operations
supported by the I/O-CPQA~$Q$. The operations call the auxiliary operation
\textsc{Bias}$(Q)$, which will be described last, that improves the inequality
of invariant \iref{in:ineq} for~$Q$ by at least~$1$. All operations
take~$\bigO(1)$ worst case I/Os.  We also show that every operation
takes~$\bigO(1/b)$ amortized~I/Os, where~$1 \leq b \leq B$.

\paragraph{\textsc{FindMin}($Q$)} returns the value $\min(\first(C(Q)))$. 

\paragraph{\textsc{DeleteMin}($Q$)} removes element $e = \min(\first(C(Q)))$
from record $(l,p) = \first(C(Q))$. After the removal, if $|l| < b$ and $|Q|
\geq b$ hold, we do the following. If $b \leq |\first(\rest(C(Q)))| \leq 2b$,
then we merge $\first (C(Q))$ with $\first (\rest(C(Q)))$ into one record which
is the new first record. Else if $2b < |\first(\rest(C(Q)))| \leq 3b$ then we
take~$b$ elements out of $\first(\rest(C(Q)))$ and put them into $\first(C(Q))$.
Else we have that $3b < |\first(\rest(C(Q)))|$, and as a result we take $2b$
elements out of $\first(\rest(C(Q)))$ and put them into $\first(C(Q))$. If the
inequality for $Q$ is aggravated by $1$ we call \textsc{Bias}($Q$) once.
Finally, element $e$ is returned. 

\noindent \textit{Amortization:} Only if the size of $\first(C(Q))$ becomes
$|\first(C(Q))| = b -1$ do we incur any I/Os. In this case~$r =\first(Q)$ has a
potential of $\Phi_F(|r|) =2$, and since we increase the number of elements
in~$r$ by~$b$ to~$2b$ elements, the potential of~$r$ will then only
be~$\Phi_F(|r|) =1$. Thus, the total potential decreases by~$1$, which also pays
for any I/Os including those incurred if \textsc{Bias}$(Q)$ is invoked.

\paragraph{\textsc{CatenateAndAttrite}($Q_1, Q_2$)} concatenates~$Q_2$ to the
end of~$Q_1$ and removes the elements from~$Q_1$ with value larger
than~$\min(Q_2)$. To do so, it creates a new I/O-CPQA~$Q'_1$ by modifying~$Q_1$
and~$Q_2$, and by calling \textsc{Bias}($Q'_1$) and \textsc{Bias}($Q_2$).

If $|Q_1| < b$, then $Q_1$ is only one record~$(l_1,\cdot)$, and so we prepend
it into the first record $(l_2,\cdot) = \first(Q_2)$ of~$Q_2$.  Let~$l_1'$ be
the non-attrited elements of~$l_1$. We perform the prepend as follows.
If~$|l_1'| + |l_2| \leq 4b$, then we prepend~$l_1'$ into~$l_2$. Else, we take
$2b - |l_1'|$ elements out of~$l_2$, and make them along with~$l_1'$ the new
first record of~$Q_2$.

\noindent \textit{Amortization:} If we simply prepend~$l_1'$ into~$l_2$, then
the potential~$\Phi_S(|l_1|)$ pays for the increase in potential of
$\Phi_F(|\first(C(Q_2))|)$. Else, we take~$2b - |l_1'|$ elements out of~$l_2$,
and these elements along with~$l_1'$ become the new first record of~$Q_2$ of
size~$2b$. Thus,~$\Phi_F(2b) = 1$ and the potential drops by~$1$, which is
enough to pay for the I/Os used to flush the old first record of~$C(Q_2)$ to
disk.

\noindent If~$|Q_2| < b$, then $Q_2$ only consists of one record. We have two
cases, depending on how much of~$Q_1$ is attrited by~$Q_2$. Let~$r_1$ be the
second last record for~$Q_1$ and let~$r_2 = \last(Q_1)$ be the last record.
If~$e$ attrites all of $r_1$, then we just pick the appropriate case among
(\ref{it:Q1C}--\ref{it:D}) below. Else if~$e$ attrites partially~$r_1$, but not
all of it, then we delete $r_2$ and we merge~$r_1$ and~$Q_2$ into the new last
record of~$Q_1$, which cannot be larger than $5b$. Otherwise if~$e$ attrites
partially~$r_2$, but not all of it, then we simply append the single record
of~$Q_2$ into~$r_2$, which will be the new last record of $Q_1$ and it cannot be
larger than $5b$.

\noindent \textit{Amortization:} If~$e$ attrites all of~$r_1$, then we release
at least~$1$ in potential, so all costs in any of the cases
(\ref{it:Q1C}--\ref{it:D}) are paid for. If~$e$ attrites partially~$r_1$, then
the new record cannot contain more than~$5b$ elements, and thus any increase in
potential is paid for by the potential of~$Q_2$. Thus, the I/O cost is covered
by the decrease of~$1$ in potential, caused by~$r_1$. If~$e$ attrites
partially~$r_2$, any increase in potential is paid for by the potential of
$Q_2$.

\noindent We have now dealt with the case where~$Q_1$ is a small queue, so in
the following we assume that~$Q_1$ is large. Let~$e = \min(Q_2)$.
\begin{enumerate}[1)]
  \item \label{it:Q1C} If $e \leq \min(\first(C(Q_1)))$, we discard
    I/O-CPQA~$Q_1$ and set~$Q'_1 = Q_2$.

  \item \label{it:Q1lastC} Else if $e \leq \max(\last(C(Q_1)))$, we remove the
    simple record $(l,\cdot) = \first(C(Q_2))$ from~$C(Q_2)$, we set $C(Q'_1) =
    \emptyset$, $B(Q'_1) = C(Q_1)$ and $D_1(Q'_1) = (l,p)$, where~$p$ points
    to~$Q_2$, if it exists. This aggravates the inequality for~$Q_2$ by at
    most~$1$, and gives~$\Delta (Q'_1) = - 1$. Thus, we call
    \textsc{Bias}$(Q_2)$ once and \textsc{Bias}$(Q'_1)$ once.

  \item \label{it:B} Else if $e \leq \min(\first(B(Q_1)))$ or $e \leq
    \min(\first(D_1(Q_1)))$ holds, we remove the simple record $(l,\cdot) =
    \first(C(Q_2))$ from~$C(Q_2)$, set $D_1(Q'_1) =(l,p)$, and make~$p$ point to
    $Q_2$, if it exists. If $e \leq \min(\first(B(Q_1)))$, we set~$B(Q_1') =
    \emptyset$. This aggravates the inequality for~$Q_2$ by at most~$1$, and
    aggravates the inequality for~$Q_1$ by at most $1$. Thus, we call
    \textsc{Bias}$(Q_2)$ once and \textsc{Bias}$(Q'_1)$ once.

  \item \label{it:D} Else, let $(l_1,\cdot) = \last(D_{k_{Q_1}})$. We remove
    $(l_2,\cdot) =\first(C(Q_2))$ from $C(Q_2)$. If~$|l_1| < b$, then remove the
    record $(l_1,\cdot)$ from $D_{k_{Q_1}}$. Let $l_1'$ be the non-attrited
    elements under attrition by~$e = \min(l_2)$. If $|l_1'| + |l_2| \leq 4b$,
    then we prepend~$l_1'$ into~$l_2$ of record~$r_2 =(l_2, p_2)$, where~$p_2$
    points to~$Q_2$. Otherwise. we make a new simple record $r_1$ with~$l_1'$
    and~$2b$ elements taken out of~$r_2=(l_2,p_2)$. Finally, we put the
    resulting one or two records~$r_1$ and~$r_2$ into a new
    deque~$D_{k_{Q_1}+1}(Q_1)$.  This aggravates the inequality for~$Q_2$ by at
    most~$1$, and the inequality for~$Q_1$ by at most~$2$. Thus, we call
    \textsc{Bias}$(Q_2)$ once and \textsc{Bias}$(Q'_1)$ twice.
\end{enumerate}
\noindent \textit{Amortization:} In all the cases (\ref{it:Q1C}--\ref{it:D})
both $Q_1$ and $Q_2$ are large, hence when we concatenate them we decrease the
potential by at least~$1$, as the number of large I/O-CPQA's decrease by one
which is enough to pay for any \textsc{Bias} operations.

\paragraph{\textsc{InsertAndAttrite}($Q$, $e$)} inserts an element~$e$ into
I/O-CPQA~$Q$ and attrites the elements in~$Q$ with value larger than~$e$. This
is a special case of operation \textsc{CatenateAndAttrite}($Q_1$,$Q_2$),
where~$Q_1 = Q$ and~$Q_2$ is an I/O-CPQA that only contains one record with the
single element~$e$. 

\noindent \textit{Amortization:} Since creating a new I/O-CPQA with only one
element and calling \textsc{CatenateAndAttrite} only costs~$\bigO(1/b)$ I/Os
amortized, the operation \textsc{InsertAndAttrite} also costs~$\bigO(1/b)$
I/Os amortized.

\fullcmt{

  \begin{figure}[htb]
    \centering
    \def\svgwidth{\linewidth}
    \executeiffilenewer{./figure/CPQABias.svg}{./figure/CPQABias.pdf}
     {inkscape -z -D --file=./figure/CPQABias.svg 
     --export-pdf=./figure/CPQABias.pdf --export-latex}
    \arxivexcl{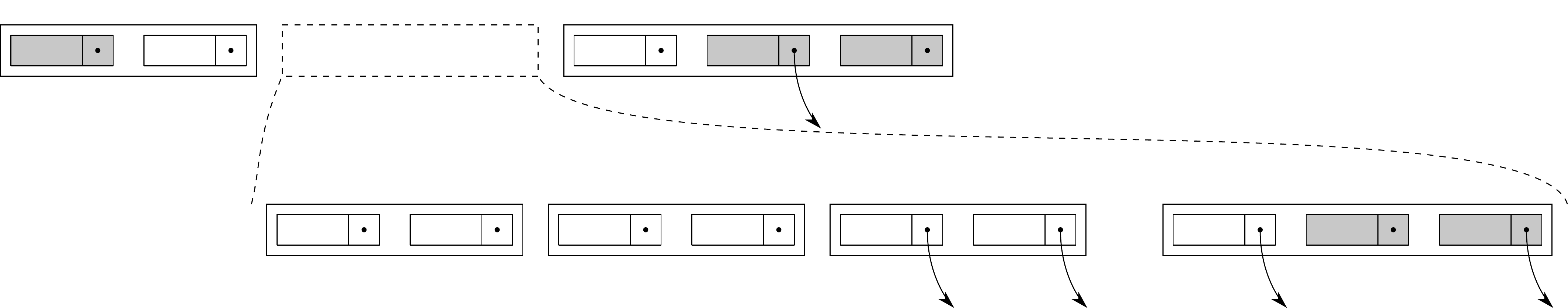}
    {\import{./figure/}{CPQABias.pdf_tex}}
    \caption{In the case of \textsc{Bias}$(Q)$,
where~$B(Q) = \emptyset$ and~$k_Q = 1$, we need to follow the pointer~$p$
of~$(l,p) = \first(D_1(Q))$ that may point to an I/O-CPQA~$Q'$. If so, we merge
it into~$Q$, taking into account attrition of~$Q'$ by~$e =
\min(\first(D_1(Q)))$.}
    \label{fig:Bias}
  \end{figure}
}

\paragraph{\textsc{Bias}$(Q)$} improves the inequality in \iref{in:ineq} for~$Q$
by at least~$1$. 

\noindent \textit{Amortization:} Since all I/Os incurred by \textsc{Bias}$(Q)$
are already paid for by the operation that called \textsc{Bias}$(Q)$, we only
need to argue that the potential of~$Q$ does not increase due to the changes
that \textsc{Bias}$(Q)$ makes to~$Q$.
\begin{enumerate}[1)]
  \item \label{it:Blg0} $|B(Q)| > 0$: We remove the first record $\first(B(Q))
    =(l_1,\cdot)$ from $B(Q)$ and let $(l_2,p_2) = \first(D_1(Q))$.  Let~$l_1'$
    be the non-attrited elements of~$l_1$ under attrition from~$e = \min(l_2)$.
    \begin{enumerate}[1)]
      \item $0 \leq |l_1'| < b$: If $|l_2| \leq 2b$, then we just prepend~$l_1'$
        onto~$l_2$. Else, we take~$b$ elements out of~$l_2$ and append them
        to~$l_1'$.
    
      \item $b \leq |l_1'| < 2b$: If $|l_2| \leq 2b$, and if furthermore $|l_1'|
        + |l_2| \leq 3b$ holds, then we merge~$l_1'$ and~$l_2$. Else~$|l_1'| +
        |l_2| > 3b$ holds, so we take~$2b$ elements out of~$l_1'$ and~$l_2$ and
        put them into $l_1'$, leaving the rest in~$l_2$.
        
        Else~$|l_2| > 2b$ holds, so we take~$b$ elements out of~$l_2$ and put
        them into~$l_1'$.
    \end{enumerate}
    If we did not prepend~$l_1'$ onto~$l_2$, we insert~$l_1'$ along with any
    elements taken out of~$l_2$ at the end of~$C(Q)$ instead. If $|l_1'| <
    |l_1|$, we set $B(Q) = \emptyset$. Else, we did prepend~$l_1'$ onto~$l_2$,
    and then we just recursively call~\textsc{Bias}. Since~$|B(Q)| = 0$ we will
    not end up in this case again.  As a result, in all cases the inequality
    of~$Q$ is improved by~$1$.

    \textit{Amortization:} If~$l_1 = \first(Q)$, then after calling
    \textsc{Bias} we ensure that~$2b \leq |\first(Q)| \leq 3b$, and so the that
    potential of~$Q$ does not increase.

    \item \label{it:Beq0} $|B(Q)| = 0$: When $|B(Q)| = 0$ holds, we have two
      cases depending on the number of dirty queues, namely cases $k_Q >
      1$~and~$k_Q = 1$.
    \begin{enumerate}[1)]
      \item \label{it:KQgt1} $k_Q > 1$: Let $e = \min(\first(D_{k_Q}(Q)))$. If
        $e \leq \min(\last(D_{k_Q-1}(Q)))$ holds, we remove the record
        $\last(D_{k_Q -1}(Q))$ from~$D_{k_Q-1}(Q)$.  This improves the
        inequality of~$Q$ by~$1$.

      Else, if $\min(\last(D_{k_Q-1}(Q))) < e \leq \max(\last(D_{k_Q-1}(Q)))$
      holds, we remove record $r_1 = (l_1,p_1) = \last(D_{k_Q-1}(Q))$
      from~$D_{k_Q-1}(Q)$ and let $r_2 = (l_2,p_2) = \first(D_{k_Q}(Q))$. We
      delete any elements in~$l_1$ that are attrited by~$e$, and let~$l_1'$
      denote the non-attrited elements.
      \begin{enumerate}[1)]
        \item $0 \leq |l_1'| < b$: If $|l_2| \leq 2b$, then we just
          prepend~$l_1'$ onto~$l_2$. Otherwise, we take~$b$ elements out
          of~$l_2$ and append them to~$l_1'$.
    
        \item If $b \leq |l_1'| < 2b$: If $|l_2| \leq 2b$ and $|l_1'| + |l_2|
          \leq 3b$, then we merge~$l_1'$ and~$l_2$. Else, $|l_1'| + |l_2| > 3b$
          holds, so we take~$2b$ elements out of~$l_1'$ and~$l_2$ and put them
          into~$l_1'$, leaving the rest in~$l_2$.
        
        Else $|l_2| > 2b$, so we take~$b$ elements out of~$l_2$ and put them
        into~$l_1'$.
    \end{enumerate}
    If~$r_1$ still exists, we insert it in the front of~$D_{k_Q}(Q)$. Finally,
    we concatenate~$D_{k_Q-1}(Q)$ and~$D_{k_Q}(Q)$ into one deque. This improves
    the inequality of~$Q$ by at least~$1$.

    Else $\max(\last(D_{k_Q-1}(Q))) < e$ holds, and we just concatenate the
    deques~$D_{k_Q-1}(Q)$ and~$D_{k_Q} (Q)$, which improves the inequality
    for~$Q$ by~$1$.

    \textit{Amortization:} If not all of~$l_1$ is attrited then we ensure that
    its record~$r_1$ has size between~$2b$ and~$3b$. Thus, if $r_1 = \first(Q)$
    holds, we will not have increased the potential of~$Q$. In the cases where
    all or none of~$l_1$ is attrited, the potential of $Q$ can only be decreased
    by at least~$0$.

  \item \label{it:KQeq1} $k_Q = 1$: In this case~$Q$ contains only deques~$C(Q)$
    and~$D_1(Q)$. We remove the record $r = (l,p) = \first(D_1(Q))$ and
    insert~$l$ into a new record at the end of~$C(Q)$. This improves the
    inequality of~$Q$ by at least~$1$. If~$r$ is not simple, let~$r$'s
    pointer~$p$ point to I/O-CPQA~$Q'$. We restore \iref{in:simple} for~$Q$ by
    merging I/O-CPQAs~$Q$ and~$Q'$ into one I/O-CPQA.\fullcmt{ See
    Figure~\ref{fig:Bias} for this case of operation~\textsc{Bias}.} In
    particular, let~$e = \min(\first(D_1(Q)))$, we now proceed as follows:
      
    If $e \leq \min(Q')$, we discard~$Q'$. The inequality for~$Q$ remains
    unaffected.

    Else, if $\min(\first(C(Q'))) < e \leq \max(\last (C(Q'))$, we set $B(Q) =
    C(Q')$ and discard the rest of~$Q'$. The inequality for~$Q$ remains
    unaffected.

    Else if $\max(\last(C(Q')) < e \leq \min(\first(D_1(Q')))$, we concatenate
    the deque~$C(Q')$ at the end of~$C(Q)$. If moreover $\min(\first(B(Q'))) <
    e$ holds, we set $B(Q) = B(Q')$. Finally, we discard the rest of~$Q'$. This
    improves the inequality for~$Q$ by~$|C(Q')|$.

    Else $\min(\first(D_1(Q'))) < e$ holds. We concatenate the deque~$C(Q')$ at
    the end of~$C(Q)$, we set~$B(Q) = B(Q')$, we
    set~$D_1(Q'),\ldots,D_{k_{Q'}}(Q')$ as the first~$k_{Q'}$ dirty queues
    of~$Q$ and we set~$D_1(Q)$ as the last dirty queue of~$Q$. This improves the
    inequality for~$Q$ by~$\Delta(Q') \geq 0$, since~$Q'$ satisfied
    \iref{in:ineq} before the operation.

    If~$r=\first(Q)$ and $|l| \leq 2b$, then we remove $r$ and run \textsc{Bias}
    recursively. Let $r' = (l',p') = \first(Q)$. If $|l| + |l'| > 3b$, then we
    take the~$2b$ first elements out and make them the new first record
    of~$C(Q)$. Else we merge~$l$ into~$l'$, so that~$r$ is removed and~$r'$ is
    now~$\first(Q)$.

    \textit{Amortization:} Since $\first(Q)$ is either untouched or left with
    $2b$ to~$3b$ elements, in which case its potential is~$1$, and since all
    other changes decrease the potential by at least~$0$, we have that
    \textsc{Bias} does not increase the potential of~$Q$.
  \end{enumerate}
\end{enumerate}

\begin{theorem} \label{thm:iocpqa}
  A set of~$\ell$ I/O-CPQA's can be maintained supporting the operations
  \textsc{FindMin}, \textsc{DeleteMin}, \textsc{CatenateAndAttrite} and
  \textsc{InsertAndAttrite} in~$\bigO(1/b)$ I/Os amortized and~$\bigO(1)$ worst
  case I/Os per operation. The space usage is $\bigO(\frac{n-m}{b}) $ blocks
  after calling \textsc{CatenateAndAttrite} and \textsc{InsertAndAttrite}~$n$
  times and \textsc{DeleteMin}~$m$ times, respectively. We require that~$M \geq
  \ell b$ for~$1 \leq b \leq B$, where~$M$ is the main memory size and~$B$ is
  the block size.
\end{theorem}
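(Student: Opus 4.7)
}
My plan is to assemble the per-operation case analyses already sketched in Section~\ref{sec:iocpqa} into a single coherent argument that separately establishes (a) the $\bigO(1)$ worst case I/O bound, (b) the $\bigO(1/b)$ amortized bound via the potential $\Phi_T$, (c) the $\bigO((n-m)/b)$ space bound, and (d) the main memory requirement $M\geq \ell b$. The key organising idea is that the only records ever touched by an operation are $\first(Q)$, $\last(Q)$ and, through the \textsc{Bias} subroutine, a constant number of records adjacent to them in the clean, buffer, and last-two dirty deques; everything else in $\midd(Q)$ is left untouched and only contributes to the middle term of $\Phi(Q)$.

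For the worst case bound, I would argue that each of \textsc{FindMin}, \textsc{DeleteMin}, \textsc{CatenateAndAttrite} and \textsc{InsertAndAttrite} performs a constant number of record splits, merges and pointer manipulations, each of which inspects $\bigO(1)$ records whose buffers fit in $\bigO(1)$ blocks. The subroutine \textsc{Bias} is called at most twice per operation and, inspecting cases~\ref{it:Blg0}--\ref{it:KQeq1}, executes $\bigO(1)$ work plus at most one recursive call that is immediately absorbed by the size conditions on $\first(Q)$ (so recursion depth is $\bigO(1)$). This gives $\bigO(1)$ worst case I/Os per operation.

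For the amortized bound I would sum the per-case potential arguments already given. Each case of each operation is analysed so that the actual I/O cost plus the change in $\Phi_T$ is $\bigO(1/b)$: the extremal shapes of $\Phi_F$ and $\Phi_L$ are calibrated so that a $\first$- or $\last$-record whose size drops below $b$ or climbs above $4b$ has released $\Theta(1)$ units of potential, which pays for flushing or loading the block; the term $|\midd(Q)|$ absorbs records freshly promoted to the middle by \textsc{Bias}; and the $+1$ per large I/O-CPQA in $\Phi_T$ pays for the two \textsc{Bias} calls performed by \textsc{CatenateAndAttrite} when two large queues merge. For the space bound, invariants \iref{in:small} and~\iref{in:smalltail} guarantee that every record except possibly $\first(Q)$ and $\last(Q)$ has buffer size at least $b$, so the number of records across all queues is $\bigO((n-m)/b)$ after $n$ insertions and $m$ deletions, yielding $\bigO((n-m)/b)$ blocks. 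Finally, for the memory requirement, the $\bigO(1/b)$ amortized analysis assumes $\first(Q)$ and $\last(Q)$ are already resident; keeping them in memory for all $\ell$ live queues requires $\bigO(\ell b)$ words, giving $M\geq \ell b$.

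The main obstacle I anticipate is verifying that \textsc{Bias} is self-consistent under recursion. In case~\ref{it:KQeq1} a recursive call is issued when the old $\first(Q)$ is merged away, and in case~\ref{it:Blg0} another recursive call is issued after prepending $l_1'$ onto $l_2$; I must check that after a single level of recursion the post-conditions on $|\first(Q)|\in [2b,3b]$ hold so that $\Phi_F(|\first(Q)|)=1$ is restored, that invariants~\iref{in:records}--\iref{in:smalltail} are re-established in every branch (in particular that attrition never violates~\iref{in:recordpairs} or~\iref{in:min}), and that the two \textsc{Bias} calls in case~\ref{it:D} jointly account for the $\Delta(Q'_1)=-2$ aggravation. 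Once these local checks are combined with the global book-keeping of $\Phi_T$, the theorem follows.
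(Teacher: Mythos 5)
Your proposal matches the paper's own proof in both substance and structure: the paper likewise argues correctness from invariants I.1--I.8, gets the $\bigO(1)$ worst case bound by noting each operation touches $\bigO(1)$ records with \textsc{Bias} recursion bounded by a constant depth (the paper pins it at $3$), and obtains the $\bigO(1/b)$ amortized bound by simply invoking the per-operation potential arguments already given in the text. Your additional remarks on the space bound via I.\ref{in:small}--I.\ref{in:smalltail} and on $M \geq \ell b$ fill in details the paper's proof leaves implicit, but the overall route is the same.
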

\begin{proof}
  The correctness follows by closely noticing that we maintain invariants
  \iref{in:records}--\iref{in:smalltail}, and from those we have that
  \textsc{DeleteMin}$(Q)$ and \textsc{FindMin}$(Q)$ always returns the minimum
  element of~$Q$.

  The worst case I/O bound of~$\bigO(1)$ is trivial as every operation only
  touches~$\bigO(1)$ records. Although \textsc{Bias} is recursive, we notice
  that in the case where $|B(Q)| > 0$, \textsc{Bias} only calls itself after
  making $|B(Q)| = 0$, so it will not end up in this case again. Similarly, if
  $|B(Q)| = 0$ and $k_Q > 1$ there might also be a recursive call to
  \textsc{Bias}. However, before the call at least~$b$ elements have been taken
  out of~$Q$, and thus the following recursive call to \textsc{Bias} will ensure
  at least~$b$ more are taken out. This is enough to stop the recursion, which
  will have depth at most~$3$.
 
  The~$\bigO(1/b)$ amortized I/O bounds, follows from the potential analysis
  made throughout the description of each operation.
\end{proof}

\subsection{Concatenating a Sequence of I/O-CPQAs} \label{ssec:seq} 

We describe how to \textsc{CatenateAndAttrite} I/O-CPQAs
$Q_{1},Q_2,\ldots,Q_{\ell}$ into a single I/O-CPQA in $\bigO(1)$ worst case
I/Os, given that \textsc{DeleteMin} is not called in the sequence of operations.
We moreover impose two more assumptions. In particular, we say that I/O-CPQA $Q$
is in \textit{state} $x \in \mathbb{Z}$, if $|C(Q)| = \sum_{i=1}^{k_Q}{|D_i(Q)|}
+ k_Q -1 +x$ holds.  Positive~$x$ implies that \textsc{Bias}$(Q)$ will be called
after the inequality for~$Q$ is aggravated by~$x+1$. Negative~$x$ implies that
\textsc{Bias}$(Q)$ need to be called~$x$ operations times in order to restore
inequality for~$Q$.  So, we moreover assume that I/O-CPQAs~$Q_{i},i \in[1,\ell]$
are at state at least~$+2$, unless~$Q_{i}$ contains only one record in which
case it may be in state~$+1$.  We call a record $r=(l,p)$ in an I/O-CPQA $Q_i$
\textit{critical}, if~$r$ is accessed at some time during the sequence of
operations.  In particular, the critical records for~$Q_i$ are
$\first(C(Q_i)),\first(\rest(C(Q_i))),\last(C(Q_i)),\first(B(Q_i)),\first(D_1(Q_i)),\last(D_{k_{Q_i}}(Q_i))$,
and $\last(\front(D_{k_{Q_i}}(Q_i)))$ if it exists. Otherwise, record
$\last(D_{k_{Q_i}-1}(Q_i))$ is critical.  So, we moreover assume that the
critical records for I/O-CPQAs~$Q_{i},i\in[1,\ell]$ are loaded into memory.

The algorithm considers I/O-CPQAs~$Q_{i}$ in decreasing index~$i$ (from right to
left). It sets $Q^{i}=Q_\ell$ and constructs the temporary I/O-CPQA $Q^{i-1}$ by
calling \textsc{CatenateAndAttrite}($Q_{i-1}$,$Q^{i}$).  This yields the final
I/O-CPQA~$Q^{1}$.

\begin{lemma} \label{lem:seq_concats} I/O-CPQAs
  $Q_{i},i\in[1,\ell]$ can be \textsc{CatenateAndAttrite}d into a single
  I/O-CPQA without any access to external memory, provided that:
  \begin{enumerate} 
    \item $Q_{i}$ is in state at least $+2$, unless it contains only one record,
      in which case its state is at least $+1$,

    \item all critical records of all $Q_{i}$ reside in main memory.
  \end{enumerate}
\end{lemma}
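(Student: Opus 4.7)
The plan is to argue by induction on the right-to-left processing order that each invocation of \textsc{CatenateAndAttrite}$(Q_{i-1},Q^{i})$ (and the \textsc{Bias} calls it triggers) touches only records that are already in main memory, and that the resulting I/O-CPQA $Q^{i-1}$ inherits both the state condition and the in-memory critical-records condition needed to play the role of the right operand of the next \textsc{CatenateAndAttrite}.

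First I would inspect the algorithm for \textsc{CatenateAndAttrite}. Its branching depends on comparing $e=\min(Q^{i})$ against values read from $\first(C(Q_{i-1}))$, $\last(C(Q_{i-1}))$, $\first(B(Q_{i-1}))$, $\first(D_1(Q_{i-1}))$, and $\last(D_{k_{Q_{i-1}}}(Q_{i-1}))$, and in Case~4 it also inspects $\last(\front(D_{k_{Q_{i-1}}}(Q_{i-1})))$ (or $\last(D_{k_{Q_{i-1}}-1}(Q_{i-1}))$ if the former does not exist). These are exactly the critical records of $Q_{i-1}$, which are in memory by assumption. From $Q^{i}$, only $\first(C(Q^{i}))$ is actually manipulated; this record is critical for $Q^{i}$ and, by the inductive hypothesis, is in memory. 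Hence the body of \textsc{CatenateAndAttrite} itself performs no I/O. The small-$Q_1$ and small-$Q_2$ sub-cases are analogous: they only touch the unique record of the small queue, which is its $\first(C(\cdot))=\last(\cdot)$ and therefore critical.

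Next I would handle the \textsc{Bias} calls. \textsc{Bias}$(Q^{i})$ is launched on the right operand and \textsc{Bias}$(Q^{i-1})$ is launched once (Cases~2,3) or twice (Case~4) on the newly formed queue. A case analysis on the three branches of \textsc{Bias} shows that each branch only reads $\first(B)$, $\first(D_1)$, $\last(D_{k_Q-1})$, $\first(D_{k_Q})$, or, in the $k_Q=1$ branch, the analogous critical records of the pointed sub-queue $Q'$; all of these are covered by the critical-record list. For the potential recursion inside \textsc{Bias}, the proof of Theorem~\ref{thm:iocpqa} already shows its depth is at most $3$ and that the recursive call operates on records that were exposed (and hence loaded) by the immediately preceding step, so no additional I/O is needed. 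The main obstacle here is to check that the aggravation bookkeeping matches: \textsc{CatenateAndAttrite} aggravates the inequality of $Q^{i}$ by at most $1$ and that of $Q^{i-1}$ by at most $2$, and the state $\geq +2$ hypothesis gives exactly enough slack so that the single/double \textsc{Bias} call restores $\Delta\geq +1$, which in particular keeps the invocation local and allows $Q^{i-1}$ to enter state $\geq +2$ again (or $\geq +1$ if it degenerates to a single record).

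Finally I would close the induction by verifying that the resulting $Q^{i-1}$ still has its critical records resident: all records newly produced by \textsc{CatenateAndAttrite} and \textsc{Bias} are assembled from in-memory buffers, hence the new $\first(C)$, $\last(C)$, $\first(B)$, $\first(D_1)$, $\last(D_{k})$ and $\last(\front(D_{k}))$ of $Q^{i-1}$ are constructed in memory and remain there. Combined with the in-memory critical records of $Q_{i-2}$, which are untouched, this re-establishes both preconditions of the lemma for the next iteration. Iterating $\ell-1$ times yields $Q^{1}$ with no external-memory access, proving the claim.
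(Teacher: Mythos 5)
There is a genuine gap in your treatment of \textsc{Bias}. Your argument lets the \textsc{Bias} calls prescribed by the generic \textsc{CatenateAndAttrite} algorithm actually run, and claims (i) that every branch of \textsc{Bias} touches only critical records and (ii) that its recursion stays on records ``exposed by the immediately preceding step'' because Theorem~\ref{thm:iocpqa} bounds the recursion depth by $3$. Neither claim holds: the depth-$3$ argument in Theorem~\ref{thm:iocpqa} yields $\bigO(1)$ I/Os, not zero I/Os, and a full \textsc{Bias} can access records outside the critical list --- most notably in the $k_Q=1$ branch, where it dereferences the pointer of $\first(D_1(Q))$ and merges in the pointed-to queue $Q'$, touching $C(Q')$, $B(Q')$ and its dirty deques, and in the recursive calls, which read records such as the new first records of deques after removals. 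The lemma's hypotheses are not there to ``pay for'' these \textsc{Bias} calls; they are there so that the calls can be \emph{skipped}. The paper's proof maintains by induction that the temporary queue $Q^{i}$ is in state at least $+1$ (not $+2$, as you assert for the new queue): then the aggravation by at most $1$ that \textsc{CatenateAndAttrite} inflicts on the right operand leaves it at state $\geq 0$, so \textsc{Bias}$(Q^{i})$ is never invoked at all, and the state $\geq +2$ of $Q_{i-1}$ absorbs the aggravation by $1$ in Cases 2--3 so that no \textsc{Bias} is needed on the new queue either. Only in Case 4 (aggravation by $2$) is anything done, and there the paper applies exactly one instance of the specific subcase 2(1) of \textsc{Bias} and verifies explicitly that the two records it reads, $\last(D_{k_{Q^{i}}-1})$ and $\first(D_{k_{Q^{i}}})$, are in the critical list; no recursion and no pointer-following can occur in that subcase.

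Without this restructuring your induction does not close: since \textsc{Bias} in general touches non-critical records and sub-queues residing on disk, the claim ``hence the body performs no I/O'' is unsupported precisely at the step where the lemma is delicate. To repair the proof you would need to replace the ``afford the \textsc{Bias} calls'' bookkeeping with the paper's argument that the temporary queue's state invariant ($\geq +1$) and the $\geq +2$ hypothesis on each $Q_{i}$ make all generic \textsc{Bias} calls unnecessary, except for the single, explicitly memory-resident application in Case 4 (and the separate easy handling of small queues / queues with at most two records).
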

\confcmt{We refer the reader to the full version of the paper.}
\begin{fullenv}
\begin{proof}
  To avoid any I/Os during the sequence of \textsc{CatenateAndAttrite}s, we
  ensure that~\textsc{Bias} is not called, and that the critical records are
  sufficient, and thus no more records need to be loaded into memory.

  To avoid calling~\textsc{Bias} we prove by induction the invariant that the
  temporary I/O-CPQAs $Q^{i},i\in[1,\ell]$ constructed during the sequence are
  in state at least $+1$. Let the invariant hold of $Q^{i+1}$ and let $Q^{i}$
  be constructed by \textsc{CatenateAndAttrite}($Q_{i}$,$Q^{i+1}$). If~$Q_i$
  contains at most two records, which both reside in dequeue~$C(Q_i)$, we only
  need to access record $\first (C(Q^{i+1}))$ and the at most two records
  of~$Q_i$. The invariant holds for~$Q^i$, since it holds inductively for
  $Q^{i+1}$ and the new records were added at~$C(Q^{i+1})$. As a result, the
  inequality of \iref{in:ineq} for~$Q^{i+1}$ can only be improved. If $Q^{i+1}$
  consists of only one record, then either one of the following cases apply or
  we follow the steps described in operation \textsc{CatenateAndAttrite}. In the
  second case, there is no aggravation for the inequality of \ref{in:ineq} and
  only critical records are used.

  In the following, we can safely assume that~$Q_i$ has at least three records
  and its state is at least~$+2$. We parse the cases of the
  \textsc{CatenateAndAttrite} algorithm assumming that $e=\min(Q^{i+1})$.
  \begin{itemize}
    \item[Case 1] {The invariant holds trivially since~$Q_i$ is discarded and no
      change happens to~$Q^{i} = Q^{i+1}$.  \textsc{Bias} is not called.}

    \item[Cases 2,3] {The algorithm checks whether the first two records
        of~$C(Q_i)$ are attrited by~$e$. If this is the case, we continue as
        denoted at the start of this proof. Otherwise, case~\ref{it:Q1lastC} of
        \textsc{CatenateAndAttrite} is applied as is. $Q^{i+1}$ is in state $0$
        after the concatenation and $Q^i$ is in state $+1$. Thus the invariant
        holds, and \textsc{Bias} is not. Note that all changes take place at the
        critical records of $Q_i$ and $Q^{i+1}$.}

    \item[Case 4] {The algorithm works exactly as in case~\ref{it:D} of
        \textsc{CatenateAndAttrite}, with the following exception. At the
        end,~$Q^i$ will be in state~$0$, since we added the deque
        $D_{k_{Q^{i+1}}+1}$ with a new record and the inequality of
        \iref{in:ineq} is aggrevated by $2$. To restore the invariant we apply
        case 2(1) of~\textsc{Bias}. This step requires access to records~$\last
        (D_{k_{Q^i}-1})$ and $\first (D_{k_{Q^i}})$. These records are both
        critical, since the former corresponds to $\last (D_{k_{Q^{i+1}}})$ and
        the latter to $\first C(Q^{i+1})$. In addition, \textsc{Bias}$(Q^{i+1})$
        need not be called, since by the invariant, $Q^{i+1}$ was in state $+1$
      before the removal of $\first C(Q^{i+1})$. In this way, we improve the
    inequality for $Q^i$ by~$1$ and invariant holds.}
  \end{itemize}
\end{proof}
\end{fullenv}

\section{Dynamic Planar Range Skyline Reporting} \label{sec:skyline}

In this Section we present dynamic I/O-efficient data structures that support
3-sided planar orthogonal range skyline reporting queries.

\paragraph{3-Sided Skyline Reporting}

We describe how to utilize I/O-CPQAs in order to obtain dynamic data structures
that support 3-sided range skyline reporting queries and arbitrary insertions
and deletions of points, by modifying the approach of~\cite{OL81} for the
pointer machine model. In particular, let~$P$ be a set of~$n$ points in the
plane, sorted by $x$-coordinate. To access the points, we store
their~$x$-coordinates in an $(a,2a)$-tree~$T$ with branching parameter~$a\geq 2$
and leaf parameter~$k\geq1$. In particular, every node has degree within
$[a,2a]$ and every leaf contains at most~$k$ consecutive by $x$-coordinate input
points. Every internal node~$u$ of~$T$ is associated with an I/O-CPQA whose
non-attrited elements correspond to the maximal points among the points stored
in the subtree of~$u$. Moreover, $u$ contains a \textit{representative block}
with the critical records of condition 2 in Lemma~\ref{lem:seq_concats} for the
I/O-CPQAs associated with its children nodes.

To construct the structure, we proceed in a bottom up manner. First, we compute
the maximal points among the points contained in every leaf of~$T$. In
particular for every leaf, we initialize an I/O-CPQA~$Q$. We consider the
points~$(p_x,p_y)$ stored in the block in increasing $x$-coordinate, and
call~\textsc{InsertAndAttrite}($Q,-p_y$). In this way, a point~$p$ in the block
that is dominated by another point~$q$ in the block, is inserted before~$q$
in~$Q$ and has value~$-p_y > -q_y$. Therefore, the
dominated points in the block correspond to the attrited elements in~$Q$.

We construct the I/O-CPQA for an internal node~$u$ of~$T$ by concatenating the
already constructed I/O-CPQAs~$Q_{i}$ at its children nodes~$u_i$ of~$u$, for $
i \in [1,a]$ in Section~\ref{sec:iocpqa}. Then we call~\textsc{Bias} to
the resulting I/O-CPQA appropriately many times in order to satisfy condition 1
in Lemma~\ref{lem:seq_concats}. The procedure ends when the I/O-CPQA is
constructed for the root of~$T$. Notice that the order of concatenations
follows implicitly the structure of the tree~$T$. To insert (resp. delete) a
point~$p =(p_x,p_y)$ to the structure, we first insert (resp.
delete)~$p_x$ to~$T$. This identifies the leaf with the I/O-CPQA that
contains~$p$. We discard all I/O-CPQAs from the leaf to the root of~$T$, and
recompute them in a bottom up manner, as described above.

To report the skyline among the points that lie within a given 3-sided query
rectangle~$[x_\ell, x_r] \times [y_b, +\infty)$, it is necessary to obtain
the maximal points in a subtree of a node~$u$ of $T$ by querying the I/O-CPQA
stored in~$u$. Notice, however, that computing the I/O-CPQA of an internal
node of~$T$ modifies the I/O-CPQAs of its children nodes. Therefore, we can
only report the skyline of all points stored in~$T$, by calling
\textsc{DeleteMin} at the I/O-CPQA stored in the root of~$T$. The rest of the
I/O-CPQAs in~$T$ are not queriable in this way, since the corresponding nodes
do not contain the version of their I/O-CPQA, before it is modified by the
construction of the I/O-CPQA for their parent nodes. For this reason we render
the involved I/O-CPQAs confluently persistent, by implementing their clean,
buffer and dirty deques as purely functional catenable deques~\cite{KT99}. In
fact,~$T$ encodes implicity the directed acyclic version graph of the
confluently persistent I/O-CPQAs, by associating every node of~$T$ with the
version of the I/O-CPQA at the time of its construction. Every internal node
of $T$ stores a representative block with the critical records for the
versions of the I/O-CPQAs associated with its children nodes. Finally, the
update operation discards the I/O-CPQA of a node in~$T$, by performing in
reverse the operations on the purely functional catenable deques involved in
the construction of the I/O-CPQA (undo operation).

With the above modification it suffices for the query operation to identify the
two paths $p_\ell, p_r$ from the root to the leaves of $T$ that contain the
$x$-successor point of $x_\ell$ and the $x$-predecessor point of $x_r$,
respectively. Let $R$ be the children nodes of the nodes on the paths $p_\ell$
and $p_r$ that do not belong to the paths themselves, and also lie within the
query $x$-range. The subtrees of $R$ divide the query $x$-range into disjoint
$x$-ranges. We consider the nodes of $R$ from left to right. In particular,
for every non-leaf node in $p_\ell \cup p_r$, we load into memory the
representative blocks of the versions of the I/O-CPQAs in its children nodes
that belong to $R$. We call \textsc{CatenateAndAttrite} on the loaded I/O-CPQAs
and on the resulting I/O-CPQAs for every node in $p_\ell \cup p_r$, as decribed
in Section~\ref{sec:iocpqa}. The non-attrited elements in the resulting
auxiliary I/O-CPQA correspond to the skyline of the points in the query
$x$-range, that are not stored in the leaves of~$p_\ell$ and~$p_r$. To report
the output points of the query in increasing $x$-coordinate, we first report the
maximal points within the query range among the points stored in the leaf of
$p_\ell$. Then we call \textsc{DeleteMin} to the auxiliary I/O-CPQA that returns
the maximal points in increasing $x$-coordinate, and thus also in decreasing
$y$-coordinate, and thus we terminate the reporting as soon as a skyline point
with $y$-coordinate smaller than $y_b$ is returned. If the reporting has not
terminated, we also report the rest of the maximal points within the query range
that are contained in the leaf of $p_r$.
\begin{theorem} \label{thm:3sided}
  There exist I/O-efficient dynamic data structures that store a set of $n$
  planar points and support reporting the $t$ skyline points within a given
  3-sided orthogonal range unbounded by the positive $y$-dimension in~$\bigO
  (\log_{2B^{\epsilon}} n + t/B^{1-\epsilon})$ worst case I/Os, and updates
  in~$\bigO(\log_{2B^{\epsilon}} n)$ worst case I/Os, using~$\bigO
  (n/B^{1-\epsilon})$ disk blocks, for a parameter~$0 \leq \epsilon \leq 1$.
\end{theorem}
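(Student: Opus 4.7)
I set the $(a,2a)$-tree branching parameter to $a = B^{\epsilon}$, the leaf capacity to $k = B^{1-\epsilon}$, and instantiate the I/O-CPQAs of Theorem~\ref{thm:iocpqa} with $b = B^{1-\epsilon}$. The tree $T$ then has height $\Theta(\log_{B^\epsilon} n) = \Theta(\log_{2B^\epsilon} n)$, and the representative block of each internal node holds the $\bigO(a) = \bigO(B^\epsilon)$ critical records of its children's I/O-CPQAs in $\bigO(1)$ disk blocks since $a \leq B$. Correctness reduces to an induction on $T$: the non-attrited elements of the I/O-CPQA at each node are the negated $y$-coordinates of the skyline of its subtree, in increasing $x$-order. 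At a leaf, \textsc{InsertAndAttrite}ing its points left to right on their negated $y$-coordinates attrites exactly the dominated ones, since a point $p$ dominated by a later $q$ is inserted earlier but carries the larger key $-p_y>-q_y$. At an internal node, the Overmars--van Leeuwen identity---the skyline of $P_L\cup P_R$, with $P_L$ strictly left of $P_R$, equals the prefix of the $P_L$-skyline above $\max_y(P_R)$ concatenated with the $P_R$-skyline---is realized by a single \textsc{CatenateAndAttrite}, iterated over the $a$ children. Confluent persistence via~\cite{KT99} keeps the children's versions alive for later reuse.

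\textbf{Updates and queries.} An update modifies $T$ in $\bigO(\log_{B^\epsilon} n)$ I/Os, then, at each of the $\bigO(\log_{B^\epsilon} n)$ ancestors of the affected leaf, undoes the old I/O-CPQA by reversing the persistent deque operations and rebuilds it from its children using Lemma~\ref{lem:seq_concats}, after loading the ancestor's representative block at $\bigO(1)$ I/O. Hence updates cost $\bigO(\log_{2B^\epsilon} n)$ worst case I/Os. A query descends the two search paths $p_\ell,p_r$, loads their $\bigO(\log_{B^\epsilon} n)$ representative blocks, and applies Lemma~\ref{lem:seq_concats} to build in memory an auxiliary I/O-CPQA whose non-attrited elements, by correctness, encode the skyline of the interior $x$-slab. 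After scanning the two boundary leaves in $\bigO(1)$ further I/Os, \textsc{DeleteMin} is called on the auxiliary I/O-CPQA until the returned key exceeds $-y_b$ or the queue empties.

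\textbf{Main obstacle and space.} The delicate step is upgrading the $\bigO(1/b)$ amortized bound on \textsc{DeleteMin} to a worst case $\bigO(t/B^{1-\epsilon})$ on the total reporting cost. I argue this by observing that the auxiliary I/O-CPQA is created at the start of the query and discarded at its end, so the global potential of Theorem~\ref{thm:iocpqa} returns to its pre-query value; consequently the total actual I/O cost of all $\bigO(\log_{B^\epsilon} n)$ catenations plus the $\bigO(t)$ \textsc{DeleteMin}s equals their total amortized cost $\bigO((\log_{B^\epsilon} n + t)/B^{1-\epsilon})$. Combining with the $\bigO(\log_{B^\epsilon} n)$ I/Os for tree descent and block loads yields the stated query bound. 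For space, the tree skeleton occupies $\bigO(n/B^{1-\epsilon})$ blocks; because the functional deques share structure across versions and each update eagerly undoes the stale version before creating the new one, the total storage of all currently live I/O-CPQAs also sums to $\bigO(n/B^{1-\epsilon})$ blocks.
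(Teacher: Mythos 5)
Your construction follows the paper's proof essentially step for step (an $(a,2a)$-tree with leaf parameter $k=b=B^{1-\epsilon}$, one confluently persistent I/O-CPQA per node, representative blocks of critical records, queries assembled along the two search paths via Lemma~\ref{lem:seq_concats}, updates by undoing and rebuilding along a root-to-leaf path), so the overall route is the same. One small parameter slip: take $a=2B^{\epsilon}$ rather than $a=B^{\epsilon}$; for $\epsilon=0$ your branching parameter degenerates and $\log_{B^{\epsilon}}n$ is meaningless, which is exactly why the stated bounds are written with base $2B^{\epsilon}$.

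The substantive problem is your justification of the worst-case reporting cost. You argue that since the auxiliary I/O-CPQA is discarded when the query ends, the global potential returns to its pre-query value, and hence the total \emph{actual} I/O cost of the query-time catenations and the $t$ \textsc{DeleteMin}s equals their total \emph{amortized} cost. This amortized-to-worst-case conversion is not valid. The identity (actual total) $=$ (amortized total) $+\ \Phi_{\mathrm{init}}-\Phi_{\mathrm{final}}$ presupposes that the per-operation accounting of Theorem~\ref{thm:iocpqa} holds for the operations as executed; that accounting was derived for ephemeral operations, where \textsc{CatenateAndAttrite} consumes its arguments and their released potential pays for the result. At query time the arguments are persistent versions that stay alive in $T$, so either the new version's potential appears without being paid for (and the $\bigO(1/b)$ amortized bound no longer holds with respect to the true global potential), or, doing the accounting ephemerally, the potential does \emph{not} return to its starting value and the difference can be large. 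More basically, discarding a structure cannot refund I/Os that were already performed: in the amortized analysis the occasional expensive \textsc{DeleteMin} is paid for precisely by a potential drop, and that drop is the slack between actual and amortized cost, which your argument never bounds. The correct argument — and the one the paper's assertion of ``$\bigO(1+t/B^{1-\epsilon})$ worst case I/Os to report'' rests on — is direct: a \textsc{DeleteMin} incurs I/Os only when $|\first(C(Q))|$ drops below $b$, at which point the refill (and the at most one accompanying \textsc{Bias} call, $\bigO(1)$ I/Os worst case) restores it to at least $2b-1$ elements, so among any $t$ consecutive \textsc{DeleteMin}s on the auxiliary queue at most $1+t/b$ of them touch the disk; likewise the $\bigO(\log_{2B^{\epsilon}}n)$ query-time catenations are charged at their $\bigO(1)$ worst-case cost (zero external accesses under the conditions of Lemma~\ref{lem:seq_concats}), not at their amortized cost. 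With that replacement the query bound stands; your update and space arguments are fine and match the paper's.
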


\begin{proof}
  We set the buffer size parameter $b$ of the I/O-CPQAs equal to the leaf
  parameter $k$ of $T$, and we set the parameters~$a = 2B^\epsilon$ and~$k =
  B^{1-\epsilon}$ for~$0 \leq \epsilon \leq 1$.  In this way, for a node of $T$,
  the representative blocks for all of its children nodes can be loaded into
  memory in $\bigO(1)$ I/Os. Since every operation supported by an I/O-CPQA
  involves a~$\bigO(1)$ number of deque operations, I/O-CPQAs can be made
  confluently persistent without deteriorating their I/O and space complexity.
  Moreover, the undo operation takes $\bigO(1)$ worst case I/Os, since the
  purely functional catenable deques are worst case efficient.

  Therefore by Theorem~\ref{thm:iocpqa}, an update operation takes $\bigO
  (\log_{2B^{\epsilon}} \frac{n}{B^{1-\epsilon}}) = \bigO(\log_{2B^{\epsilon}}
  n) $ worst case I/Os. Lemma~\ref{lem:seq_concats} takes $\bigO(1)$ I/Os to
  construct the temporary I/O-CPQAs for every node in the search paths, since
  they satisfy both of its conditions. Moreover, by Theorem~\ref{thm:iocpqa}, it
  takes $\bigO(\frac{\log_{2B^{\epsilon}} n}{B^{1-\epsilon}})$ I/Os to
  catenate them together. Thus, the construction of the auxiliary query
  I/O-CPQA takes $\bigO(\log_{2B^{\epsilon}} n)$ worst case I/Os in total.
  Moreover, it takes $\bigO(1 + t/B^{1-\epsilon})$ worst case I/Os to report
  the output points. There are $\bigO(\frac{n}{B^{1- \epsilon}})$ internal
  nodes in $T$, and every internal node contains $\bigO(1)$ blocks.
\end{proof}

\begin{fullenv}
\paragraph{4-Sided Skyline Reporting}

Dynamic I/O-efficient data structures for 4-sided range skyline reporting
queries can be obtained by following the approach of Overmars and Wood for
dynamic rectangular visibility queries~\cite{OW88}. In particular, 4-sided range
skyline reporting queries are supported in $\bigO(\frac{a \log^2 n} {\log a \log
{2B^{\epsilon}} } + t/B^{1-\epsilon})$ worst case I/Os, using
$\bigO(\frac{n}{B^{1-\epsilon}} \log_a n)$ blocks, by employing our structure
for 3-sided range skyline reporting as a secondary structure on a dynamic range
tree with branching parameter~$a$, built over the $y$-dimension. Updates are
supported in $\bigO(\frac{\log^2 n} {\log a \log {2B^{\epsilon}} })$ worst case
I/Os, since the secondary structures can be split or merged in
$\bigO(\log_{2B^\epsilon} n)$ worst case I/Os.

\begin{remark}
  In the pointer machine, the above constructions attains the same
  complexities as the existing structures for dynamic 3-sided and
  4-sided range maxima reporting~\cite{BT11}, by setting the buffer
  size, branching and leaf parameter to $\bigO(1)$.
\end{remark}
\end{fullenv}

\section{Lower Bound for Dominating Minima Reporting} \label{sec:dommaxlb} 

Let~$S$ be a set of~$n$ points in~$\mathbb{R}^2$. Let~$\mathcal{Q} =\{Q_i\}$ be
a set of~$m$ orthogonal 2-sided query ranges~$Q_i \in \mathbb{R}^2$. Range $Q_i$
is the subspace of $\mathbb{R}^2$ that dominates a given point~$q_i\in
\mathbb{R}^2$ in the positive $x$- and $y$- direction (the ``upper-right''
quadrant defined by~$q_i$). Let $S_i=S \cap Q_i$ be the set of all points in~$S$
that lie in the range~$Q_i$. A \textit{dominating minima reporting query} $Q_i$
contains the points~$\min(S_i) \in S_i$ that do not dominate any other point
in~$S_i$. In this section we prove that any pointer-based data structure that
supports dominating minima queries in~$\bigO(\log^{\bigO(1)}{n}+t)$ time, must
use superlinear space. This separates the problem from the easier problem of
supporting dominating maxima queries and the more general 3-sided range skyline
reporting queries. The same trade-off also holds for the symmetric
\textit{dominated maxima reporting queries} that are the simplest special case
of 4-sided range skyline reporting queries that demands superlinear space.
Moreover, the lower bound holds trivially for the I/O model, if no address
arithmetic is being used. In particular, for a query time of~$\bigO
(\frac{\log^{\bigO(1)}{n}}{B}+\frac{t}{B})$ the data structure must definitely
use~$\Omega (\frac{n}{B}\frac{\log{n}}{\log{\log{n}}})$ blocks of space. In the
following, we prove the lower bound for the dominating minima reporting queries.

Henceforth, we use the terminology presented in Section~\ref{sect:prel}. Without
loss of generality, we assume that $n = \omega^\lambda$, since this
restriction generates a countably infinite number of inputs and thus the lower
bound is general. In our case, $\omega =\log^\gamma{n}$ holds for some
$\gamma \> 0$, $m=2$ and $\lambda=\left\lfloor
\frac{\log{n}}{1+\gamma\log{\log{n}}}\right \rfloor$. Let~$\rho_{\omega}(i)$ be
the integer obtained by writing~$0\leq i <n$ using~$\lambda$
digits in base~$\omega$, by first reversing the digits and then taking their
complement with respect to~$\omega$. In particular, if~$i=i^{(\omega)}_0
i^{(\omega)}_1 \ldots i^{(\omega)}_{\lambda-1}$ holds, then
\[
  \rho_{\omega}(i)
  =
  (\omega-i^{(\omega)}_{\lambda-1}-1)(\omega-i^{(\omega)}_{\lambda-2}-1)\ldots
  (\omega-i^{(\omega)}_1-1)(\omega-i^{(\omega)}_0-1)
\]
where~$i^{(\omega)}_j$ is the~$j$-th digit of number~$i$ in base~$\omega$. We
define the points of~$S$ to be the set $\{(i,\rho_{\omega}(i))| 0 \leq i <n\}$.
Figure~\ref{fig:lower} shows an example with $\omega=4$, $\lambda=2$.

To define the query set~$\mathcal{Q}$, we encode the set of points
$\{\rho_\omega(i)|0 \leq i < n\}$ in a full trie structure of depth~$\lambda$.
Recall that $n = \omega^{\lambda}$. Notice that the trie structure is implicit
and it is used only for presentation purposes. Input points correspond to the
leaves of the trie and their $y$ value is their label at the edges of the trie.
Let~$v$ be an internal node at depth~$d$ (namely,~$v$ has~$d$ ancestors), whose
prefix~$v_0, v_1, \ldots, v_{d-1}$ corresponds to the path from~$v$ to the
root~$r$ of the trie. We take all points in its subtree and sort them by~$y$.
From this sorted list we construct groups of size~$\omega$ by always picking
each~$\omega^{\lambda-d-1}$-th element starting from the smallest non-picked
element. Each such group corresponds to the output of each query.\fullcmt{ See
Figure~\ref{fig:lower} for an example.} In this case, we say that the query is
\textit{associated} to node~$v$.

\fullcmt{
\fig{}{1}{scale=0.7}{lower}{fig:lower}{An example for $\omega = 4$ and
  $\lambda = 2$. Two examples of queries are shown, out of the~$8$ possible
  queries with different output.  Connecting lines represent points whose~$L_1$
  distance is~$\omega^k, 1 \leq k \leq \lambda$. All $8$ possible queries can be
  generated by translating the blue lines horizontally so that the answers of
  all $4$ queries are disjoint.  Similarly for the red lines with the exception
  that we translate them vertically.}}

A node of with depth~$d$ has~$\frac{n}{\omega^d}$ points in its subtree and thus
it defines at most~$\frac{n}{\omega^{d-1}}$ queries. Thus, the total number of
queries is:
\[
  \left|\mathcal{Q}\right| = \sum_{d=0}^{\lambda-1}{\omega^d
  \frac{n}{\omega^{d+1}}} = \sum_{d=0}^{\lambda-1}{\frac{n}{\omega}} =
  \frac{\lambda n}{\omega}
\]
This means that the total number of queries is
\[
  |\mathcal{Q}|=\frac{\lambda n}{\omega}
  =
  \frac{\log{n}}{1+\gamma\log{\log{n}}}\frac{1}{\log^\gamma{n}}n
  =
  \frac{n}{\log^{\gamma-1}{n}(1+\gamma\log{\log{n}})}
\] 
The following lemma states that~$\mathcal{Q}$ is appropriate for our purposes.
\begin{lemma} \label{lem:query}
  $\mathcal{Q}$ is~$(2,\log^\gamma{n})$-favorable. 
\end{lemma}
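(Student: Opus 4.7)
The plan is to verify the two conditions of $(2,\log^\gamma n)$-favorability directly from the construction of $\mathcal{Q}$, by a case analysis on the relationship between the trie nodes associated with a pair of queries.

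The first condition $|S\cap Q_i|\geq \omega$ is immediate: each query $Q_i$ is associated with an internal node $v$ at some depth~$d$, and its output is constructed as a group containing exactly $\omega$ points selected from the $n/\omega^d$ points in the subtree of~$v$. So this part needs no work beyond citing the construction.

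For the second condition, I will fix two distinct queries $Q_{i_1}$, $Q_{i_2}$ associated with nodes $v_1,v_2$ at depths $d_1,d_2$, and split into three cases. First, if $v_1=v_2$, then the two queries correspond to distinct groups formed by the ``pick every $\omega^{\lambda-d-1}$-th element, starting from the smallest not-yet-picked element'' procedure, which partitions the subtree's point set into disjoint groups, so the intersection is empty. Second, if $v_1$ and $v_2$ are incomparable in the trie, their subtrees are disjoint, and since each query's output lies in the subtree of its associated node, the intersection is again empty. Third, assume without loss of generality that $v_2$ is a strict descendant of $v_1$, so $d_2 > d_1$.

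The key observation for the third case is that the points in the subtree of $v_2$ share the prefix of $v_2$'s root-to-$v_2$ path as the high-order digits of their $y$-coordinate, hence they occupy a contiguous block of length $n/\omega^{d_2}$ inside the $y$-sorted list of the $n/\omega^{d_1}$ points in the subtree of $v_1$. The query $Q_{i_1}$ selects an arithmetic progression of $\omega$ positions with common difference $\omega^{\lambda-d_1-1}$ in that $y$-sorted list, so the number of its picks that can fall inside the contiguous block for subtree of $v_2$ is at most
\[
\left\lfloor \frac{n/\omega^{d_2}}{\omega^{\lambda-d_1-1}}\right\rfloor + 1
= \omega^{d_1-d_2+1}+1 \leq 2,
\]
because $d_2\geq d_1+1$. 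Since $Q_{i_2}\subseteq$ subtree of $v_2$, this already bounds $|Q_{i_1}\cap Q_{i_2}|\leq 2 = \bigO(1)$, which is exactly the second favorability condition for $m=2$.

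The only step with any subtlety is the contiguity claim in the third case, i.e., that the descendant subtree's points form a consecutive range in the ancestor's $y$-sorted list; this follows because the trie is built on the digits of the $y$-coordinates, so sharing a prefix in the trie is equivalent to sharing the high-order digits of $y$, and within the ancestor's sorted list such points are consecutive. Everything else is a direct counting of how many equally spaced picks fit in a block of a given length.
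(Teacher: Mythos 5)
Your treatment of the second favorability condition is sound, but the first condition is where the real content of the lemma lies, and you have skipped it. The condition ``output size at least $\omega$'' is not true by fiat: the queries $Q_i$ are geometric dominating-minima queries (2-sided quadrants), and the construction only designates a combinatorial group of $\omega$ points (every $\omega^{\lambda-d-1}$-th element of the $y$-sorted subtree list). One must prove that there actually exists a quadrant whose set of minimal points is (at least, and in fact exactly) this group. This is where the permutation $\rho_\omega$ (digit reversal plus complement) enters: within a group the $y$-coordinates share both a length-$d$ prefix and a length-$(\lambda-d-1)$ suffix, so they differ in a single digit, and reversing/complementing forces the corresponding $x$-coordinates to decrease as the $y$-coordinates increase. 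Hence the group is a descending staircase (an antichain under dominance), so all $\omega$ points can simultaneously be minima; and one must further position the two sides of the quadrant just below the lowest $y$ and just left of the smallest $x$ of the group and argue that no other input point lies between the staircase and these two lines, so that no group point is disqualified and the output is the group. Without this argument the first condition is unproven, and --- equally important --- your second part silently relies on it, since you identify ``the output of $Q_i$'' with ``the group of $v_i$''; if the outputs were not exactly the groups, your disjointness bounds would not apply to the sets that the Chazelle--Liu lemma actually counts.

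Your case analysis for the second condition is correct and takes a genuinely different route from the paper: where the paper argues digit-by-digit (the single digit that varies inside one group is frozen inside the other, so two groups can share at most one point), you observe that a descendant's subtree occupies a contiguous block of $\omega^{\lambda-d_2}$ positions in the ancestor's $y$-sorted list and count how many terms of an arithmetic progression with step $\omega^{\lambda-d_1-1}$ fit in it, obtaining at most $2$ (in fact at most $1$, since the block length never exceeds the step); either bound is $\bigO(1)$, so this part stands once the identification of outputs with groups is justified. To repair the proposal, add the staircase/realizability argument for the first condition as described above.
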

\begin{proof}
  First we prove that we can construct the queries so that they have output size
  $\omega = \log^\gamma{n}$. Assume that we take a group of~$\omega$ consecutive
  points in the sorted order of points with respect to the $y$-coordinate at the
  subtree of node~$v$ at depth~$d$. These have common prefix of length~$d$. Let
  the $y$-coordinates of these points be
  $\rho_{\omega}(i_1),\rho_{\omega}(i_2),\ldots,\rho_{\omega}(i_{\omega})$ in
  increasing order, where $\rho_{\omega}(i_j) -\rho_{\omega}(i_{j-1})
  =\omega^{\lambda-d-1}, 1 < j \leq \omega$. This means that these numbers
  differ only at the $\lambda - d -1$-th digit. This is because they have a
  common prefix of length~$d$ since all points lie in the subtree of $v$. At the
  same time they have a common suffix of length~$\lambda -d -1$ because of the
  property that $\rho_{\omega}(i_j) -\rho_{\omega}(i_{j-1})
  =\omega^{\lambda-d-1}, 1 < j \leq \omega$ which comes as a result from the way
  we chose these points. By inversing the procedure to construct these
  $y$-coordinates, the corresponding $x$-coordinates~$i_j, 1 \leq j \leq \omega$
  are determined. By complementing we take the increasing sequence
  $\bar{\rho}_{\omega}(i_{\omega}),\ldots,\bar{\rho}_{\omega}(i_2),\bar{\rho}_{\omega}(i_1)$,
  where $\bar{\rho}_{\omega}(i_j)=\omega^\lambda-\rho_{\omega}(i_j)-1 $ and
  $\bar{\rho}_{\omega}(i_{j-1}) -\bar{\rho}_{\omega}(i_{j})
  =\omega^{\lambda-d-1}, 1 < j \leq \omega$. By reversing the digits we finally
  get the increasing sequence of $x$-coordinates $i_{\omega},\ldots,i_2,i_1$,
  since the numbers differ at only one digit. Thus, the group of~$\omega$ points
  are decreasing as the $x$-coordinates increase, and as a result a query~$q$
  whose horizontal line is just below~$\rho_{\omega}(i_1)$ and the vertical line
  just to the left of~$\rho_{\omega}(i_{\omega})$ will certainly contain this
  set of points in the query. In addition, there cannot be any other points
  between this sequence and the horizontal or vertical lines defining query $q$.
  This is because all points in the subtree of~$v$ have been sorted with respect
  to~$y$, while the horizontal line is positioned just
  below~$\rho_{\omega}(i_1)$, so that no other element lies in between. In the
  same manner, no points to the left of~$\rho_{\omega}(i_{\omega})$ exist, when
  positioning the vertical line of~$q$ appropriately. Thus, for each query~$q
  \in \mathcal{Q}$, it holds that~$|S\cap q|=\omega=\log^\gamma{n}$.

  It is enough to prove that for any two query ranges $p,q \in \mathcal{Q}$, $|S
  \cap q \cap p| \leq 1$ holds. Assume that~$p$ and~$q$ are associated to
  nodes~$v$ and~$u$, respectively, and that their subtrees are disjoint. That
  is,~$u$ is not a proper ancestor or descendant of~$v$. In this case, $p$
  and~$q$ share no common point, since each point is used only once in the trie.
  For the other case, assume without loss of generality that~$u$ is a proper
  ancestor of~$v$ ($u \neq v$). By the discussion in the previous paragraph,
  each query contains~$\omega$ numbers that differ at one and only one digit.
  Since~$u$ is a proper ancestor of~$v$, the corresponding digits will be
  different for the queries defined in~$u$ and for the queries defined in~$v$.
  This implies that there can be at most one common point between these
  sequences, since the digit that changes for one query range is always set to a
  particular value for the other query range. The lemma follows.
\end{proof}
Lemma~\ref{lem:query} allows us to apply Lemma~\ref{lem:lower}, and thus the
query time of $\bigO(\log^\gamma{n} + t)$, for output size~$t$, can only be
achieved at a space cost of $\Omega
\left(n\frac{\log{n}}{\log{\log{n}}}\right)$. The following theorem summarizes
the result of this section.
\begin{theorem} \label{thm:lower}
  The dominating minima reporting problem can be solved with
  $\Omega\left(n\frac{log{n}}{\log{\log{n}}}\right)$ space, if the query is
  supported in~$\bigO(\log^\gamma{n} + t)$ time, where~$t$ is the size of the
  answer to the query and parameter $\gamma = \bigO(1)$.
\end{theorem}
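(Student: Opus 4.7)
The plan is essentially to combine the ingredients that the section has already assembled: the hard instance $S$, the query family $\mathcal{Q}$, the favorability bound from Lemma~\ref{lem:query}, and the pointer-machine counting lemma~\ref{lem:lower} of Chazelle and Liu. The whole argument is a direct plug-in, so my proposal is simply to check that the hypotheses of Lemma~\ref{lem:lower} are satisfied with the right parameters and then do the arithmetic.

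First, I would observe that any data structure answering dominating minima queries in $\bigO(\log^{\gamma}n + t)$ worst case time yields a graph $G$ that is $(\alpha,\omega)$-effective for $\omega = \log^{\gamma}n$ and some constant $\alpha$: the query time is of the form $\alpha(t+\omega)$ with $\alpha = \bigO(1)$, exactly matching the definition recalled in Section~\ref{sect:prel}. Next, Lemma~\ref{lem:query} supplies that the query set $\mathcal{Q}$ built from the trie of depth $\lambda$ over the points $\{(i,\rho_\omega(i))\}$ is $(2,\log^{\gamma}n)$-favorable with respect to $S$; this is the only substantive combinatorial input and it has already been established.

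With these two facts in hand I would apply Lemma~\ref{lem:lower} with $m=2$, $\omega = \log^{\gamma}n$, and constant $\alpha$. The conclusion is that $G$ contains
\[
  \Omega\!\left(\frac{|\mathcal{Q}|\,\omega}{m}\right)
  = \Omega\!\left(\frac{1}{2}\cdot\frac{n}{\log^{\gamma-1}n\,(1+\gamma\log\log n)}\cdot \log^{\gamma}n\right)
  = \Omega\!\left(\frac{n\log n}{\log\log n}\right)
\]
nodes, which is the claimed space lower bound. I would also note that $\omega = \log^{\gamma}n$ grows without bound as $n\to\infty$, so the ``any large enough $\omega$'' clause of Lemma~\ref{lem:lower} is satisfied for all sufficiently large $n$, and that $\gamma = \bigO(1)$ keeps $\alpha$ constant so that Lemma~\ref{lem:lower} applies.

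Since every nontrivial step has already been done in the preceding lemmas, I do not anticipate a real obstacle. The only mildly delicate points to get right are: (i) matching the query time $\bigO(\log^{\gamma}n + t)$ against the definition of $(\alpha,\omega)$-effectiveness so that $\omega$ in the effectiveness definition coincides with $\omega = \log^{\gamma}n$ from the favorability side (without this matching, Lemma~\ref{lem:lower} does not fire); and (ii) simplifying $|\mathcal{Q}|\omega/m$ correctly, using the earlier computation $|\mathcal{Q}| = \lambda n/\omega$ together with $\lambda = \lfloor \log n/(1+\gamma\log\log n)\rfloor$, to extract the $\Theta(n\log n/\log\log n)$ bound. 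Both are routine once the parameters are aligned.
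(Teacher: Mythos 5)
Your proposal is correct and follows exactly the paper's own argument: the theorem is obtained by feeding the $(2,\log^{\gamma}n)$-favorability of $\mathcal{Q}$ from Lemma~\ref{lem:query} into the Chazelle--Liu counting bound of Lemma~\ref{lem:lower} with $m=2$, $\omega=\log^{\gamma}n$, constant $\alpha$, and $|\mathcal{Q}|=\lambda n/\omega$, yielding $\Omega\left(n\frac{\log n}{\log\log n}\right)$ nodes. Your arithmetic and the check of the effectiveness hypothesis match the paper's (largely implicit) plug-in step, so there is nothing to add.
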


\section{Conclusion} \label{sect:concl}

We presented the first dynamic I/O-efficient data structures for 3-sided planar
orthogonal range skyline reporting queries with worst case polylogarithmic
update and query complexity. We also showed that the space usage of the existing
structures for 4-sided range skyline reporting in pointer machine is optimal
within doubly logarithmic factors.

It remains open to devise a dynamic I/O-efficient data structure that supports
reporting all $m$ planar skyline points in $\bigO(m/B)$ worst case I/Os and
updatess in $\bigO(\log_B n)$ worst case I/Os. It seems that the hardness for
reporting the skyline in optimal time is derived from the fact that the problem
is dynamic. The dynamic indexability model of Yi~\cite{Y09} may be useful to
prove a lower bound towards the direction of rendering our structure for 3-sided
range skyline reporting~\textit{I/O-optimal}, as defined by Papadias et
al.\cite{PTFS05}. Finally it remains open to obtain a $\bigO(\frac{n}{B}\log_B
n)$ space dynamic I/O-efficient data structures for 4-sided range skyline
reporting with $\bigO(\log^2_B n)$ worst case query and update I/Os, regardless
of the I/O-complexity per reported point.

\clearpage
\bibliographystyle{plain}	
\bibliography{References}	
\end{document}